\newtheorem{thm}{Theorem}[section]
\newtheorem{prop}[thm]{Proposition}
\newtheorem{conj}[thm]{Conjecture}
\newtheorem{cor}[thm]{Corollary}
\DeclareMathOperator{\dist}{dist}
\begin{document}

\title{Classification of Open and Closed Convex Codes on Five Neurons}
\author[1]{Sarah Ayman Goldrup\thanks{sarah96.sg@gmail.com}} 
\author[1]{Kaitlyn Phillipson\thanks{kphillip@stedwards.edu}}
\affil[1]{Department of Mathematics, St.\ Edward's University}

\maketitle

\begin{abstract}
Neural codes, represented as collections of binary strings, encode neural activity and show relationships among stimuli. Certain neurons, called place cells, have been shown experimentally to fire in convex regions in space. A natural question to ask is: Which neural codes can arise as intersection patterns of convex sets? While past research has established several criteria, complete conditions for convexity are not yet known for codes with more than four neurons. We classify all neural codes with five neurons as convex/non-convex codes. Furthermore, we  investigate which of these codes can be represented by open versus closed convex sets. Interestingly, we find a code which is an open but not closed convex code and demonstrate a minimal example for this phenomenon.
\end{abstract}

\section{Introduction}\label{sec:intro}
Understanding neural firings patterns and studying what they represent is an important problem in neuroscience. Neural activity can be modeled via \emph{neural codes}, which are binary patterns representing the recorded neural activity.  Neural codes show  relationships between stimuli, like distance between locations in an environment. Through neural codes, the brain is able to characterize and map the physical world.

 In 1971, O'Keefe discovered place cells in the hippocampus, which is a part of the brain that processes and stores memories and is involved in navigation. \emph{Place cells} are a special type of neuron that form internal maps of the external world. O'Keefe found that place cells exhibited high firing rate when the rat was in a specific area in space, called the neuron's \emph{place field}.  Through experimental results, it was shown that place fields were approximately convex regions of the space.  O'Keefe was awarded a shared Nobel Prize in Medicine in 2014 for this work (see \cite{placecells} for more detail).
 
Some mathematical questions that arise are: Given a neural code, can it arise from a collection of convex open sets?  Can we find criteria to determine whether a neural code is convex by its combinatorial structure alone?  Furthermore, if a neural code is a convex code, what is the minimal dimension needed to represent the code geometrically? While past results gave necessary and sufficient criteria for a code to be convex, these conditions are incomplete for codes with five or more neurons. In this paper, we completely classify all neural codes on five neurons which are open convex, and give partial results for closed convex codes.
 
 In Section \ref{sec:background}, we review past results on neural codes and provide concise definitions. In Section \ref{sec:catalog}, we give a catalog of all open convex codes on five neurons which were not classified by prior results. In Section \ref{sec:unstable}, we give a new definition for codes which are open convex but not closed convex (called \emph{unstable codes}), and prove that three codes on five neurons are unstable.  We summarize and give open questions in Section \ref{sec:conclusion}.

\section{Background and previous results}\label{sec:background}
We review some notation and definitions pertaining to this problem; see \cite{neuralring} for more detail.

A \emph{codeword} on $n$ neurons is a string of  0's and 1's of length $n$, where 1 denotes neural activity and 0 denotes silence.  We can also write the codeword $\sigma$ as a subset of active neurons $\sigma\subset [n]:=\{1,2,\ldots, n\}$.  Both notations will be used interchangeably.
A \emph{neural code} on $n$ neurons is a collection of codewords $C\subset 2^{[n]}$.  For computational convenience, we will always assume that the ``silent'' codeword $000\cdots 0 (=\emptyset)$ is in $C$.

Given open sets $\mathcal{U} =\{U_1,...,U_n\}\subset \mathbb{R}^d$, the \emph{code of the cover} is the neural code defined as: 
\begin{equation*}
C(\mathcal{U}) = \{\sigma \subseteq [n]: \bigcap_{i\in \sigma} U_i \setminus \bigcup_{j \in [n]\setminus \sigma}  U_j \neq \emptyset\} 
\end{equation*}
Each codeword in $C(\mathcal{U})$ corresponds to intersections of open sets in $\mathcal{U}$ which are not covered by other sets in $\mathcal{U}$. If a neural code $C$ = $C(\mathcal{U})$, and $\mathcal{U}=\{U_1,...,U_n\}$ is a cover with each $U_i$ a convex subset of $\mathbb{R}^d$, then $C$ is a \emph{convex code} and $\mathcal{U}=\{U_1,...,U_n\}$ is a geometric realization of the code $C$.

For example, the code $C_1 =\{\emptyset, 1,2,12\} $ is a convex code because it can be realized as a collection of convex open sets (see Figure \ref {fig:exconvexcode}). The \emph{minimal embedding dimension} is the smallest dimension $d=d(C)$ such that the neural code $C$ is realizable as a convex code in $\mathbb{R}^d$. Note that though $C_1$ is drawn in $\mathbb{R}^2$, $d(C_1)=1$.
 
 \begin{figure}[ht]
\centering
\includegraphics{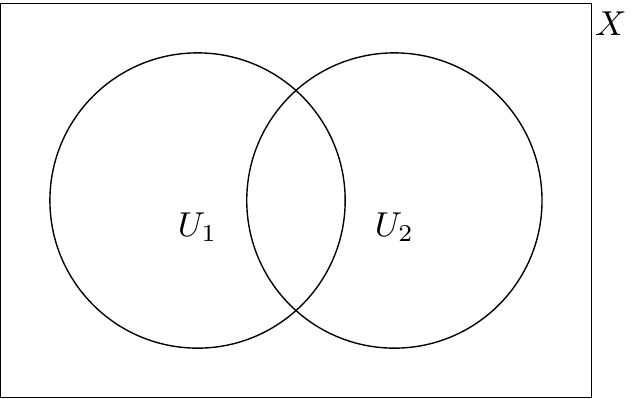}
\caption{An open convex realization of code $C_1$}\label{fig:exconvexcode}
\end{figure}

For a code $C$, we can investigate its intersection structure by constructing its \emph{simplicial complex}:

\begin{equation*}
\Delta(C):=\{\sigma\subseteq [n]:\sigma\subseteq c \;\text{for some}\; c\in C\}.
\end{equation*}

This is the smallest abstract simplicial complex\footnote{A collection $K$ of subsets of a finite set $X$ is an \emph{abstract simplicial complex} if, for every $\alpha \in K$ and $\beta\subset \alpha$, then $\beta \in K$.} which contains all elements of $C$.  Elements of $\Delta(C)$ that are maximal under inclusion are \emph{maximal codewords} (also called \emph{facets}). A code $C$ is \emph{max intersection-complete} if it contains all intersections of maximal codewords in $C$. 

For a face (or codeword) $\sigma \in \Delta$, the \emph{link} of $\sigma$ in $\Delta$ is the simplicial complex: 
\begin{equation*}
Lk_\Delta(\sigma)=\{w \in \Delta : \sigma \cap w= \emptyset, \sigma \cup w \in \Delta\}. 
\end{equation*}

One approach to deciding if a code is not convex is to determine whether a code has an obstruction to convexity due to a topological inconsistency in the intersections of its codewords. This kind of obstruction is called a \emph{local obstruction}, and the process to find these obstructions is given in \cite{curtoetal}. We summarize their findings here.

A simplicial complex is \emph{contractible} if its geometric realization is contractible\footnote{A set is \emph{contractible} if it is homotopy equivalent to a point.}. For a given simplicial complex $\Delta$, we let
\begin{equation*} C_{\min}(\Delta)=\{\sigma\in \Delta: Lk_\Delta(\sigma) \;\text{is non-contractible}\}\cup\{\emptyset\}.\end{equation*} It was shown in \cite{curtoetal} that if $C$ is a any code with simplicial complex $\Delta$, $C$ has no local obstructions if and only if $C_{\min}(\Delta)\subseteq C$. Moreover, they showed that every nonempty element of $C_{\min}(\Delta)$ is an intersection of facets of $\Delta$. Thus, we can consider the non-empty elements of $C_{\min}(\Delta)$ as a collection of the maximal codewords (or facets) of $\Delta$ and non-maximal codewords $\sigma$ such that $Lk_\Delta(\sigma) \;\text{is non-contractible}$. We will call these non-maximal codewords the \emph{mandatory codewords} for $\Delta$, since any convex code with simplicial complex $\Delta$ must contain these codewords.

A complete condition for the convexity of a neural code is still unknown; we summarize here the known results.

\begin{prop}\label{prop:partialcond} For a neural code $C$:
\begin{itemize}
\item[1.] If $C$ is max intersection-complete, then $C$ is convex.
\item[2.]  If $C$ is convex, then $C$ has no local obstructions.

\end{itemize}
\end{prop}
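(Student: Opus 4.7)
The plan is to prove the two parts by separate arguments, both of which appear in the literature cited above.

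For part (2), which is the more conceptual direction, the plan is to apply the Nerve Lemma. Suppose $\mathcal{U}=\{U_1,\dots,U_n\}$ is a cover by convex open sets with $C(\mathcal{U})=C$. For each $\sigma\in\Delta(C)$, set $U_\sigma:=\bigcap_{i\in\sigma}U_i$; this is open and convex, hence contractible. Assume for contradiction that there exists $\sigma\in C_{\min}(\Delta(C))$ with $\sigma\notin C$. Since $\sigma\notin C$, the open set $U_\sigma$ is entirely covered by the sets $\{U_j : j\in[n]\setminus\sigma\}$. Then the collection $\mathcal{V}:=\{U_\sigma\cap U_j : j\notin\sigma,\, U_\sigma\cap U_j\neq\emptyset\}$ is an open cover of $U_\sigma$, and because all intersections of convex sets are convex (hence contractible), $\mathcal{V}$ is a good cover. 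The combinatorics of $\mathcal{V}$ are exactly those recorded by $Lk_{\Delta(C)}(\sigma)$, so the nerve of $\mathcal{V}$ is $Lk_{\Delta(C)}(\sigma)$. The Nerve Lemma then yields $U_\sigma\simeq Lk_{\Delta(C)}(\sigma)$. But $U_\sigma$ is contractible while $Lk_{\Delta(C)}(\sigma)$ is non-contractible by the definition of $C_{\min}$, a contradiction.

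For part (1), the plan is an explicit construction in the spirit of Cruz--Giusti--Itskov--Kronholm. Assume $C$ is max intersection-complete. Pick a point $p_\sigma\in\mathbb{R}^d$ (with $d=|C|$ large enough for general position) for each maximal codeword $\sigma$, and for every intersection $\tau=\sigma_1\cap\cdots\cap\sigma_k$ of maximal codewords (which lies in $C$ by hypothesis), pick an additional point $p_\tau$ placed in the relative interior of $\conv\{p_{\sigma_1},\dots,p_{\sigma_k}\}$. For each neuron $i$, let $V_i:=\conv\{p_\tau : i\in\tau\}$ and take $U_i$ to be a sufficiently small open thickening of $V_i$. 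The verification proceeds in two parts: first, for each $\tau\in C$ the point $p_\tau$ lies in $U_i$ exactly when $i\in\tau$ (by construction and general position), so every codeword of $C$ appears in $C(\mathcal{U})$; second, any nonempty atom $\bigcap_{i\in\omega}U_i\setminus\bigcup_{j\notin\omega}U_j$ must contain a point near some $p_\tau$ with $\tau\supseteq\omega$, and using max intersection-completeness one checks that the resulting $\omega$ already belongs to $C$. The key is choosing the thickening small enough that no spurious intersections appear — this is where the max intersection-complete hypothesis is used.

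The main obstacle is the geometric bookkeeping in part (1): ensuring that the thickening does not create atoms corresponding to codewords outside $C$. The hypothesis that $C$ contains all intersections of maximal codewords is precisely what is needed for this bookkeeping to succeed, since any unavoidable intersection of sets $U_i$ corresponds to an intersection of facets of $\Delta(C)$. Part (2), by contrast, is essentially immediate once the Nerve Lemma is invoked; the only subtlety is recognizing that convex open covers are good covers whose nerves record the full simplicial structure of the code.
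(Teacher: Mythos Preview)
The paper does not actually prove this proposition; it simply attributes Part~1 to \cite{openclosed} (Cruz--Giusti--Itskov--Kronholm) and Part~2 to \cite{curtoetal} and moves on. So there is no in-paper argument to compare against, only the cited literature.

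Your sketch for Part~(2) is correct and is precisely the argument in \cite{curtoetal}: convex open sets give a good cover of $U_\sigma$, the nerve of the induced cover is $Lk_{\Delta(C)}(\sigma)$, and the Nerve Lemma forces contractibility of the link whenever $\sigma\notin C$. Nothing to add there.

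Your plan for Part~(1) is in the right spirit but, as you yourself flag, the bookkeeping is where the content lies, and the sketch as written does not close the gap. The step ``any nonempty atom \dots\ must contain a point near some $p_\tau$ with $\tau\supseteq\omega$'' is not justified: once you take convex hulls $V_i=\conv\{p_\tau:i\in\tau\}$, the intersections $\bigcap_{i\in\omega}V_i$ can contain points far from every $p_\tau$, and controlling which $\omega$ arise requires more than general position of the $p_\sigma$. The construction in \cite{openclosed} handles this by a more delicate inductive placement (working down through the intersection poset and controlling thickenings at each stage), not merely by choosing generic points and a small $\epsilon$. Your outline would need that inductive structure, or an equivalent device, to become a proof.
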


Part 1 of Proposition \ref{prop:partialcond} is due to \cite{openclosed}, while Part 2 is due to \cite{curtoetal}. Note that Part 2 implies that if $C$ is convex, $C_{\min}(\Delta)\subseteq C$. The converses of Part 1 and Part 2 of Proposition \ref{prop:partialcond} hold for $n\leq 4$ (see \cite{curtoetal}); however, these statements fail for $n=5$.  An example of a convex code that is not max intersection-complete is the code $C1$ in Table \ref{table:class}. An example of a non-convex code that has no local obstructions was found in \cite{lien}, which is code $C4$ in Table \ref{table:class}.  

Thus, the classification of which max intersection-incomplete codes with no local obstructions are actually convex remains an open problem. As a step in this direction, we investigate all codes with five neurons that are max intersection-incomplete with no local obstructions. This problem has also been investigated independently in \cite{yagerthesis}.

\section{Classification of open convex codes on five neurons}\label{sec:catalog}

Given a simplicial complex $\Delta$, let $C_{\min}:=C_{\min}(\Delta)$ be the \emph{minimal code} for $\Delta$ defined in Section \ref{sec:background}. It was shown in \cite{openclosed} that open convex codes exhibit \emph{monotonicity} in the following sense: if $C$ is an open convex code, and $D$ is a code such that $C\subseteq D\subseteq \Delta(C)$, then $D$ is also open convex. Therefore, we define a simplicial complex $\Delta$ to be \emph{convex minimal} if the corresponding minimal code of $\Delta$,  $C_{\min}(\Delta)$ is open convex. If a simplicial complex $\Delta$ is convex minimal, then all codes $C$ with $\Delta(C)=\Delta$ and no local obstructions are convex.
 
In \cite{lien}, all unique simplicial complexes on 5 vertices were computed, as well as the maximal facets and the corresponding mandatory codewords for each simplicical complex. It was found that of the 157 unique simplicial complexes, for 22 of these codes, the set of mandatory codewords did not contain all possible intersections of facets. Thus, the minimal code of each of these simplicial complexes is a max intersection-incomplete code with no local obstructions, which cannot be classified by Proposition \ref{prop:partialcond}.

In Table \ref{table:class}, we classify these 22 max intersection-incomplete codes with no local obstructions. For each simplicial complex $\Delta$, we list the maximal codewords, mandatory codewords, and non-mandatory intersections of maximal codewords, and describe the convexity of $C_{\min}(\Delta)$.  We enumerate the list of minimal codes using the designations C$m$, where $m=1,\ldots, 22$.  Columns 2, 3, and 4, were computed in \cite{lien}.    

Of the 22 simplicial complexes with this property, only one minimal code does not have a convex realization: $C4$ was proved to be non-convex in \cite{lien}.  Convex realizations for each code, except C4, are given in \ref{app:geo}. Interestingly, only one code on the list has minimal dimension 3, which is $C22$, the construction of which is given in \ref{app:cleanup}. The remainder of the codes have minimal dimension 1 or 2.

\begin{table}[ht!]{\small
\centering
\begin{tabular}{|c|p{3.10cm}|p{3.10cm}|p{2.44cm}|p{2.72cm}|}
\hline
& Maximal  & Mandatory  & Non-mandatory   &\\
$C_{\min}(\Delta)$ & codewords& codewords& intersections &{\centering Classification} \\
 & (facets) & &of facets &  (see \ref{app:geo}) \\
\hline
\hline
\hline
C1 & 123, 124, 145  & 12, 14 & 1 &  convex minimal  \\ \hline
C2& 134, 135, 234, 12 & 13, 34, 1, 2 & 3 &  convex minimal \\ \hline
C3& 134, 235, 345, 12 & 34, 35, 1, 2 & 3 & convex minimal\\ \hline
C4& 2345, 123, 124, 145  & 12, 14, 23, 24, 45, 2, 4  & 1 &Not open convex (see \cite{lien}) \\ \hline
C5& 123, 124, 145, 234 & 12, 14, 23, 24, 2  & 1, 4 &convex minimal\\ \hline
C6& 123, 125, 145, 234 & 12, 15, 23, 4 & 1, 2 & convex minimal\\ \hline
C7& 145, 234, 245, 12, 13 & 24, 45, 1, 2, 3 & 4 & convex minimal\\ \hline
C8& 135, 145, 235, 12, 34 & 15, 35, 1, 2, 3, 4 & 5 & convex minimal \\ \hline
C9& 134, 135, 145, 234, 12 & 13, 14, 15, 34, 1, 2  & 3, 4 & convex minimal \\ \hline
C10& 134, 135, 234, 245, 12  & 13, 24, 34, 1, 2, 5 & 3, 4 & convex minimal\\ \hline
C11& 134, 135, 245, 345, 12 & 13, 34, 35, 45, 1, 2, 3   & 4, 5 &  convex minimal  \\ \hline
C12& 123, 124, 125, 134, 345   & 12, 13, 14, 34, 1, 5   & 3, 4 & convex minimal \\ \hline
C13& 123, 124, 135, 145, 234 & 12, 13, 14, 15, 23, 24, 1, 2  & 3, 4 & convex minimal \\ \hline
C14& 123, 124, 125, 145, 234 & 12, 14, 15, 23, 24, 1, 2 &4 &convex minimal \\ \hline
C15& 123, 125, 145, 234, 345 & 12, 15, 23, 34, 45   & 1, 2, 3, 4, 5& convex minimal \\ \hline
C16& 145, 235, 345, 12, 13, 24 &  35, 45, 1, 2, 3, 4 &5 & convex minimal \\ \hline
C17& 145, 234, 235, 245, 12, 13 & 23, 24, 25, 45, 1, 2, 3 &4, 5 & convex minimal  \\ \hline
C18& 134, 135, 145, 234, 235, 12 & 13, 14, 15, 23, 34, 35, 1, 2, 3    &4, 5 & convex minimal \\ \hline
C19& 134, 135, 234, 245, 345, 12 & 13, 24, 34, 35, 45, 1, 2, 3, 4   &5 & convex minimal \\ \hline
C20& 123, 124, 125, 135, 145, 234& 12, 13, 14, 15, 23, 24, 1, 2   &3, 4 &convex minimal \\ \hline
C21& 123, 124, 125, 145, 234, 345  &12, 14, 15, 23, 24, 45, 45, 1, 2, 4  &3, 5 & convex minimal \\ \hline
C22& 123, 124, 125, 135, 145, 234, 235 & 12, 13, 14, 15, 23, 24, 25, 35, 1, 2, 3, 5 & 4 & convex minimal with $d(C22)=3$  \\
\hline

\end{tabular}
\caption{Classification of max intersection-incomplete codes with no local obstructions on five neurons}
\label{table:class}}
\end{table}

Combining our findings in Table \ref{table:class} with the property of monotonicity for open convex codes, we immediately obtain the following theorem.

\begin{thm}
Let $C$ be a code on five neurons for which $\Delta(C)$ is not isomorphic to $\Delta(C4)$. Then $C$ is open convex if and only if $C$ has no local obstructions.
\end{thm}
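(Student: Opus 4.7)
The plan is to split the biconditional cleanly into two directions. The forward direction is a one-line appeal to Proposition \ref{prop:partialcond}(2): if $C$ is open convex, then $C$ has no local obstructions, and there is nothing further to do. All of the substance of the theorem lies in the converse.

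For the converse, I would begin by unpacking the hypothesis. Having no local obstructions is, by the characterization from \cite{curtoetal} recalled in Section \ref{sec:background}, equivalent to $C_{\min}(\Delta(C)) \subseteq C$, and combined with the automatic inclusion $C \subseteq \Delta(C)$ this sandwiches $C$ between $C_{\min}(\Delta(C))$ and $\Delta(C)$. My strategy is therefore to prove the stronger statement that $C_{\min}(\Delta(C))$ itself is open convex whenever $\Delta(C) \not\cong \Delta(C4)$; since $C_{\min}(\Delta(C))$ contains every facet of $\Delta(C)$, we have $\Delta(C_{\min}(\Delta(C))) = \Delta(C)$, so the monotonicity result of \cite{openclosed} applied to this sandwich immediately promotes the open convexity of $C_{\min}(\Delta(C))$ to that of $C$.

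To show $C_{\min}(\Delta(C))$ is open convex, I would split on max intersection-completeness. If $C_{\min}(\Delta(C))$ is max intersection-complete, Proposition \ref{prop:partialcond}(1) finishes the job. Otherwise, by the enumeration of the 157 simplicial complexes on five vertices carried out in \cite{lien}, $\Delta(C)$ must be one of the 22 complexes listed in Table \ref{table:class}. The hypothesis excluding $\Delta(C4)$ places us in one of the remaining 21 rows, each labeled \emph{convex minimal}, so $C_{\min}(\Delta(C))$ is open convex as required.

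The main obstacle is not in this top-level bookkeeping but in certifying those 21 rows — that is, producing an explicit open convex realization for each of the 21 non-$C4$ minimal codes in Table \ref{table:class}. Most of these live in dimension one or two and should be straightforward; the only genuinely delicate case is $C22$, which has $d(C22) = 3$ and therefore cannot be drawn in the plane, so its realization must be constructed carefully in $\mathbb{R}^3$ (as done in \ref{app:cleanup}). Once the realizations in \ref{app:geo} and \ref{app:cleanup} are in hand, the theorem follows immediately from the two-case argument above.
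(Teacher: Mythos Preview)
Your proposal is correct and follows essentially the same approach as the paper: the paper's proof is the single sentence ``Combining our findings in Table~\ref{table:class} with the property of monotonicity for open convex codes, we immediately obtain the following theorem,'' and you have simply unpacked this into its natural constituents (the trivial forward direction via Proposition~\ref{prop:partialcond}(2), the sandwich $C_{\min}(\Delta(C)) \subseteq C \subseteq \Delta(C)$, the max intersection-complete/incomplete case split, and monotonicity). Your identification of the 21 explicit realizations in \ref{app:geo} and \ref{app:cleanup} as the real content is exactly right.
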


These results can also be summarized by noting the sparsity of the codes. A code $C$ is \emph{$k$-sparse} if $|\sigma|\leq k$ for all $\sigma\in C$. The catalog of codes in Table \ref{table:class} gives the following result:
 
 \begin{thm}
 All 3-sparse codes with no local obstructions on five neurons are open convex.
 \end{thm}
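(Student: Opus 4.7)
The plan is to deduce this as a short corollary of the preceding theorem, leveraging the case analysis already encoded in Table~\ref{table:class}. In particular, I would not attempt to revisit any of the 22 simplicial complexes directly; the entire strategy is to argue that under a 3-sparsity hypothesis the single bad simplicial complex on the list, namely $\Delta(C4)$, cannot arise.

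The first step is the purely combinatorial observation that 3-sparsity transfers from the code to its simplicial complex. If every $\sigma\in C$ satisfies $|\sigma|\le 3$, then every face of $\Delta(C)$ is contained in some codeword of $C$ and hence also has size at most $3$. In particular, every facet of $\Delta(C)$ (equivalently, every maximal codeword of $C$) has size at most $3$.

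The second step is to inspect $C4$ in Table~\ref{table:class}. Among its maximal codewords is $2345$, a facet of size $4$. Since the maximum facet size is an isomorphism invariant of a simplicial complex, any $\Delta$ isomorphic to $\Delta(C4)$ must also carry a facet of size $4$. Combined with the previous step, this forces $\Delta(C)\not\cong \Delta(C4)$ whenever $C$ is 3-sparse. Applying the preceding theorem, which handles every $\Delta(C)\not\cong \Delta(C4)$ under the no-local-obstructions hypothesis, then yields that $C$ is open convex.

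There is no serious obstacle here: all of the work sits upstream, in the enumeration behind Table~\ref{table:class} and in the monotonicity principle for open convex codes invoked in the previous theorem. The corollary is doing nothing more than isolating a clean combinatorial criterion, bounded codeword size, that rules out the lone pathological simplicial complex $\Delta(C4)$ among the 157 possibilities on five vertices.
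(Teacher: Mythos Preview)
Your proposal is correct and matches the paper's approach: the paper also presents this theorem as an immediate consequence of Table~\ref{table:class} (and the preceding theorem), the point being precisely that the only non--convex-minimal simplicial complex on the list, $\Delta(C4)$, has the size-$4$ facet $2345$ and so cannot arise from a 3-sparse code. Your write-up simply makes explicit the one-line observation the paper leaves implicit.
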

 This gives affirmative evidence to the question introduced in \cite{newlocal}: Is every 3-sparse code with no local obstructions convex? If a counterexample exists, it must have at least 6 neurons.

\section{Closed convex codes and unstable codes}\label{sec:unstable}

\subsection{Background on closed codes}
Section \ref{sec:catalog} gave a complete classification for open convex codes on five neurons. However, the question of classifying \emph{closed convex codes}, which can be realized as a collection of closed convex sets in $\mathbb{R}^d$, is not fully answered. In \cite{openclosed}, it was shown that the results of Proposition \ref{prop:partialcond} can be extended to closed convex codes as well: Max intersection-complete codes are both open and closed convex, and if a code is closed convex, then it has no local obstructions. However, it is unknown whether the monotonicity condition holds for closed convex codes. Note that though $C4$ is not open convex, it has been shown to be closed convex (see \cite{openclosed}).  

Initially, it was conjectured that every open convex code is also a closed convex code; however,  this is not the case for neural codes with $n > 4$. The authors of \cite{openclosed} showed that the neural code $D =\{\mathbf{123, 126, 156, 456, 345, 234,}\ 12, 16, 56, 45, 34, 23, \emptyset\}$ (with facets given in bold) is an open convex code but not a closed convex code with six neurons.  

We define \emph{unstable codes} as codes that are open convex but not closed convex. Intuitively, unstable codes arise when the neural code of the open convex realization would change drastically if the open sets were perturbed slightly. It was unknown whether $D$ was the minimal example of an unstable code with respect to the number of neurons. Here, we show that five neurons is the minimal case where unstable codes occur.


\subsection{Minimal case of unstable codes}\label{subsec:mincase}
Through inspection, we can see that all the open convex realizations for the codes in Table \ref{table:class} can be viewed as closed convex realizations except for three cases: $C6, C10$, and $C15$. The following result shows that these three codes are indeed not closed convex.

\begin{thm}\label{thm:10}  The following codes are open convex but not closed convex:
\begin{itemize}
\item[] $C6=\{\mathbf{125, 234, 145, 123}, 4, 23, 15, 12, \emptyset \}$ 
\item[] $C10 = \{\mathbf{134, 245, 234, 135, 12}, 1, 5, 34, 13, 2, 24, \emptyset \}$
\item[] $C15 =\{\mathbf{145, 125, 123, 234, 345}, 23, 15, 45, 34, 12, \emptyset\}$
\end{itemize}
\end{thm}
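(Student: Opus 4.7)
The plan is, for each of the three codes, to assume a closed convex realization $\{V_1,\dots,V_5\}\subseteq \mathbb{R}^d$ exists and derive a contradiction by combining the combinatorics of missing codewords with closedness of the $V_i$. I would first extract, from every nonempty codeword $\sigma\notin C$, a forced containment of the form $\bigcap_{i\in\sigma}V_i\subseteq \bigcup_{j\in J_\sigma}V_j$, where $J_\sigma$ is the set of neurons appearing in every codeword of $C$ that properly extends $\sigma$. When $J_\sigma$ is a single index, this sharpens to a hard set containment.

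For $C6$ this extraction yields $V_5\subseteq V_1$ (since every codeword of $C6$ containing $5$ also contains $1$), $V_3\subseteq V_2$, $V_4\cap V_1=V_4\cap V_5$, $V_4\cap V_2=V_4\cap V_3$, $V_1\cap V_2\cap V_4=\emptyset$ and $V_3\cap V_5=\emptyset$. For $C10$ the analogous list gives containments $V_3\subseteq V_1\cup V_4$ and $V_4\subseteq V_2\cup V_3$, together with $V_4\cap V_1\subseteq V_3$, $V_4\cap V_5\subseteq V_2$, and the various empty quadruple intersections forced by missing codewords. For $C15$, the pentagonal symmetry gives $V_i\subseteq V_{i-1}\cup V_{i+1}$ and $V_i\cap V_{i+2}\subseteq V_{i+1}$ cyclically, together with emptiness of every quadruple intersection $V_i\cap V_j\cap V_k\cap V_\ell$.

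The second and critical step uses closedness: if $V_i\subseteq V_j\cup V_k$ with $V_j$ closed, then $V_k$ must contain the closure $\overline{V_i\setminus V_j}$, so every point of $V_i$ that is interior to $V_i$ and lies on $\partial V_j$ automatically belongs to $V_k$. With these boundary inclusions in hand I would close out each case with a straight-segment argument inside some convex $V_m$ connecting two facet regions. For $C6$, take a segment in $V_4$ from the $145$ region to the $234$ region: $V_1\cap V_2\cap V_4=\emptyset$ forces this segment to exit $V_1$ strictly before entering $V_2$; pairing this with a second segment inside $V_1$ from the $123$ region to the $145$ region and applying $V_5\supseteq\overline{V_1\setminus V_2}$ produces a point forced into both $V_5$ and $V_2$ with codeword $1245\notin C6$. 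For $C15$ the singletons $\{i\}\notin C15$ make the closure step produce a codeword $\{i\}$ along any segment bridging two non-adjacent facet regions inside $V_i$, giving an immediate contradiction; the argument for $C10$ follows the $C6$ template after relabeling. The main obstacle I anticipate is precisely this second step, namely showing that the closure inclusions genuinely produce an obstruction in \emph{arbitrary} ambient dimension $d$, since in high dimension convex sets have room to route around any fixed convex obstacle unless pinned down by closedness at the appropriate boundaries. The delicacy is in using the attained exit/entry parameters of the straight segments (available because each $V_j$ is closed convex) and tracking carefully which codewords in $C$ are consistent with being in $V_m$ together with those entering or leaving indices.
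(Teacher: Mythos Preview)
Your containment extractions are correct, but the contradiction step does not close as written. For $C6$, run your first segment in $V_4$ from a point of $V_{145}$ to a point of $V_{234}$: the last point in $V_1$ lies in $V_1\cap V_4\subseteq V_5$, hence has pattern $145$; the first point in $V_2$ lies in $V_2\cap V_4\subseteq V_3$, hence has pattern $234$; the points strictly between carry pattern $4$. All three codewords belong to $C6$, so nothing forbidden appears. Your second segment in $V_1$ from $V_{123}$ to $V_{145}$ lies in $V_2\cup V_5$ and, by closedness, yields a point of pattern $125$---again in $C6$. I do not see how ``pairing'' these segments manufactures a $1245$-point: the two segments need not meet, and nothing forces $V_4$ to contain the $125$-point or $V_5$ to meet the first segment past its exit from $V_1$. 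The $C15$ sketch has the same defect: a segment in $V_1$ between the non-adjacent facet regions $145$ and $123$ is covered by $V_2\cup V_5$, so the pattern $\{1\}$ is \emph{never} realized on it; the forced crossing point has pattern $125\in C15$, and no contradiction results.

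What is missing is an extremality device that pins the geometry down in every dimension. The paper supplies one: fix $x_{145}\in U_{145}$ and $x_{234}\in U_{234}$, set $M=\overline{x_{145}x_{234}}$, and choose $x_{123}\in U_{123}$ at \emph{minimal} distance to $M$. Then $\overline{x_{123}x_{145}}\subseteq U_1\subseteq U_2\cup U_5$ produces $x_{125}$; $\overline{x_{123}x_{234}}\subseteq U_{23}$ produces $x_{23}$; and $\overline{x_{125}x_{23}}\subseteq U_2\subseteq U_1\cup U_3$ produces a new point $y_{123}\in U_{123}$ lying strictly inside the triangle $\triangle(x_{123},x_{145},x_{234})$, hence strictly closer to $M$ than $x_{123}$. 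That contradicts minimality. Your containments $U_1\subseteq U_2\cup U_5$ and $U_2\subseteq U_1\cup U_3$ are precisely the inputs this argument needs, so your first step is not wasted; but without the minimal-distance (or an equivalent compactness) trick, the segment analysis alone cannot be completed.
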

\begin{proof}
Codes $C6, C10$, and $C15$ have open convex realizations given in \ref{app:geo}. We will employ the technique used in \cite{openclosed} to show that $C6$ is not closed convex.

Suppose there is a closed convex cover $U= \{U_i\}_{i=1}^{5}$ in $\mathbb{R}^d$ for $C6$.   For $\sigma\subseteq [5]$, we let $U_\sigma=\cap_{i\in \sigma}U_i$. 
We can pick distinct points $x_{234} \in U_{234}$, and $x_{145}\in U_{145}$. Let $M$ be the line segment connecting $x_{145}$ to $x_{234}$.  Pick $x_{123} \in U_{123}$ so that for every $a \in U_{123}$, we have $\dist(a,M) \geq \dist(x_{123}, M)$, i.e., $\dist(x_{123},M) $ is the minimal distance to $M$. Let $L_1= \overline{x_{123}x_{145}}$. $L_1 \subset U_1$ since $U_1$ is convex, and $U_1\subset U_2\cup U_5$, since whenever neuron 1 appears in a codeword, it will also appear with either 2 or 5. Together, these imply $L_1 \subset U_2 \cup U_5$.  Since $L_1$ is connected and the sets $U_2 \cap L_1$ and $U_5 \cap L_1$ are closed and nonempty, $U_2 \cap U_5 \cap L_1 \subset U_{125}$ is nonempty and there is a point $x_{125} \in U_{125} \cap L_1$ that is on the line segment $L_1$ (see Figure \ref{fig:5}).

\begin{figure}[ht]
\centering
\includegraphics{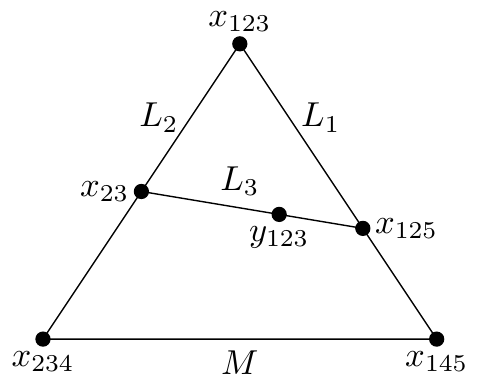}
\caption{Proof of Theorem \ref{thm:10}}\label{fig:5}
\end{figure}

Let $L_2= \overline {x_{123}x_{234}}$. $U_{23}$ is convex so $L_2 \subset U_{23}$. Since $U_{23} \cap L_2 \subset U_{23}$ is nonempty, there is a point $x_{23} \in U_{23} \cap L_2$ that is on line segment $L_2$. Let $L_3= \overline {x_{23}x_{125}}$. By convexity, $L_3 \subset U_2$ and since $U_2\subset U_1\cup U_3$, then $L_3 \subset U_1 \cup U_3$. Since $L_3$ is connected and the sets $U_1 \cap L_3$ and $U_3 \cap L_3$ are closed and nonempty, $U_1 \cap U_{3} \cap L_3 \subset U_{123}$ is nonempty, and there is a point $y_{123} \in U_{123} \cap L_3$ that is on the line segment $L_3$.

We see that the point $y_{123}$ lies in the interior of the closed triangle $\Delta (x_{123},x_{145}, x_{234})$. Therefore, $\dist(y_{123},M) < \dist(x_{123}, M) $ which is a contradiction. This implies that $C6$ cannot be realized as a collection of closed convex sets, and thus it is not a closed convex code. 

The proof for $C10$ is similar, using $x_{245}$ and $x_{135}$ as starting points, and building $x_{234}$ of minimal distance to $\overline{x_{245}x_{135}}$. For $C15$, we can use $x_{145}$ and $x_{345}$ as starting points, and choose $x_{123}$ of minimal distance to $\overline{x_{145}x_{345}}$. The details are left to the reader.
\end{proof}

Since the condition of being max intersection-complete is equivalent to being closed and open convex for $n\leq 4$ neurons (see \cite{openclosed}), we have the following result:

\begin{cor}
If $C$ is an unstable neural code on $n$ neurons, then $n\geq 5.$ This bound is tight.
\end{cor}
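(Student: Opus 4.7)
The corollary is essentially a direct consequence of the equivalence cited immediately before it, combined with Theorem \ref{thm:10}. My plan is to split the proof into the two halves of the statement.

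For the lower bound $n \geq 5$, I would argue by contraposition: suppose $C$ is a code on $n \leq 4$ neurons which is open convex. By the cited result from \cite{openclosed} that for $n \leq 4$ the properties ``max intersection-complete,'' ``open convex,'' and ``closed convex'' all coincide, $C$ must be max intersection-complete, and hence also closed convex. Therefore $C$ cannot be unstable. Equivalently, any unstable code must have at least $5$ neurons. This step is essentially a citation: the heavy lifting was done in \cite{openclosed}, and all I need to do is invoke the equivalence correctly and chain the two implications (open convex $\Rightarrow$ max intersection-complete $\Rightarrow$ closed convex).

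For tightness, I would simply point to Theorem \ref{thm:10}: the codes $C6$, $C10$, and $C15$ are each on exactly five neurons, each is open convex (a realization is exhibited in \ref{app:geo} and referenced in Table \ref{table:class}), and each was shown in Theorem \ref{thm:10} not to be closed convex. Any one of them is therefore an unstable code on $n=5$ neurons, which witnesses that the bound $n \geq 5$ cannot be improved.

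I do not expect any real obstacle here, since both halves reduce to citations of results already established in the paper. The only care needed is to phrase the first half in the correct direction (open convex implies closed convex when $n \leq 4$, so instability is impossible there), and to make sure the three exhibited examples in Theorem \ref{thm:10} genuinely qualify as unstable codes, i.e.\ that their open convexity is recorded in Table \ref{table:class} and their failure to be closed convex is what Theorem \ref{thm:10} proves. With those two observations in hand, the corollary follows in a few lines.
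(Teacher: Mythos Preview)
Your proposal is correct and matches the paper's approach exactly: the paper derives the corollary directly from the cited equivalence in \cite{openclosed} (that for $n\leq 4$, max intersection-complete, open convex, and closed convex all coincide) together with Theorem~\ref{thm:10} for tightness. You have simply spelled out what the paper leaves implicit in one sentence.
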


\subsection{Discussion on unstable codes}
We end this section by discussing some observations of the previous examples of unstable codes. We note that $D, C6, C10$, and $C15$ are all examples of $3$-sparse codes with at least four maximal codewords. When computing the simplicial complexes of $D, C6, C10, $ and $C15$, we see that they all have at least two distinct non-mandatory intersections of facets. In the proof of Theorem \ref{thm:10}, we used the absence of these codewords in the code in our construction: for a missing non-mandatory intersection of facets $\sigma$, $U_\sigma$ is a set completely contained in the union of other sets in the cover, which was used to construct a new point on each line.  

Note that in the in the open convex realizations of each code $D, C6, C10$, and $C15$, there appears to be lines slicing the plane and meeting in a common point. Moreover, the boundary points of the open convex sets overlap. 

Some differences to note are that $D$ has 6 neurons and 6 maximal codewords, $C6$ has 5 neurons and 4 maximal codewords, and $C10$ and $C15$ both have 5 neurons and 5 maximal codewords. $C15$ and $\Delta(D)$ have more than two non-mandatory intersections of facets. From these observations, and from the similarities in the proof technique used in \cite{openclosed} and \ref{subsec:mincase}, we make the following conjecture:

\begin{conj}\label{conj}
 Let $C$ be a max intersection-incomplete open convex code, where $\Delta(C)$ has at least two non-mandatory intersections of facets not contained in $C$. Suppose $C$ has at least 3 maximal codewords $ M1, M2, M3$, and there is $ \sigma \subset M1$ with $\sigma \in C$ such that $\sigma \cap M2 \not \in C$. Then $C$ is not a closed convex code. \\
\end{conj}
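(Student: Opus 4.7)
The plan is to adapt the triangle-contradiction technique from the proof of Theorem \ref{thm:10} to this more general setting. Assume for contradiction that $C$ has a closed convex realization $\mathcal{U} = \{U_i\}_{i=1}^{n}$ in some $\mathbb{R}^d$; the goal is to construct a point of $U_\sigma$ strictly closer to a distinguished line segment than a chosen minimizer, producing a contradiction.

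First I would pick representative points $x_{M_2} \in U_{M_2}$ and $x_{M_3} \in U_{M_3}$, form the closed segment $M = \overline{x_{M_2} x_{M_3}}$, and then choose $x_\sigma \in U_\sigma$ minimizing $\dist(\,\cdot\,, M)$ over $U_\sigma$. Since each $U_i$ is closed, $U_\sigma$ is closed, and a standard compactness argument (restricting to a bounded region near $M$) ensures that the minimum is attained. Next I would form the segments $L_2 = \overline{x_\sigma x_{M_2}}$ and $L_3 = \overline{x_\sigma x_{M_3}}$, which by convexity of each $U_i$ lie in $U_{\sigma \cap M_2}$ and $U_{\sigma \cap M_3}$ respectively.

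The crux is to parlay the missing codeword $\sigma \cap M_2 \notin C$, together with the second missing non-mandatory intersection of facets guaranteed by the hypothesis, into an analogue of the three-segment loop used for $C6$. Because $\sigma \cap M_2 \notin C$, every point of $U_{\sigma \cap M_2}$ lies in some $U_j$ with $j \notin \sigma \cap M_2$, and the collection of such $j$ is dictated by the codewords of $C$ strictly containing $\sigma \cap M_2$. Using closedness of each $U_j \cap L_2$ and connectedness of $L_2$, I would extract a point $z_2 \in L_2$ contained in a strictly larger intersection of sets. A symmetric construction on $L_3$, driven by the second missing non-mandatory intersection, would produce $z_3 \in L_3$ in an analogous enlarged intersection. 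Forming the segment $\overline{z_2 z_3}$, invoking convexity once more, and using a further missing codeword to force this segment back through $U_\sigma$, should yield $y_\sigma \in U_\sigma$ strictly inside the triangle with vertices $x_\sigma, x_{M_2}, x_{M_3}$, so that $\dist(y_\sigma, M) < \dist(x_\sigma, M)$, contradicting the minimality of $x_\sigma$.

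The main obstacle will be the symmetric step on the $M_3$ side. In the proof of Theorem \ref{thm:10}, the specific codes exhibited highly symmetric containments---for instance, $U_1 \subset U_2 \cup U_5$ paired with $U_2 \subset U_1 \cup U_3$ in $C6$---that let both halves of the loop close on exactly the right sets. The hypothesis of the conjecture only explicitly furnishes such a containment for $\sigma \cap M_2$; the bare existence of a second non-mandatory missing intersection somewhere in $\Delta(C)$ is not obviously enough to generate a matching containment feeding the $M_3$ side of the triangle, nor to guarantee that the resulting intersection points combine so that the final segment lies in $U_\sigma$. I expect the real work is either to strengthen the hypothesis---e.g.\ to require some $\sigma' \subset M_1$ with $\sigma' \in C$ and $\sigma' \cap M_3 \notin C$, mirroring the $M_2$ situation---or to replace the ad hoc triangle argument by a more structural obstruction, perhaps via a nerve-theoretic or Mayer--Vietoris computation that consumes only the weaker combinatorial input.
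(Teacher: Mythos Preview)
The statement you are attempting to prove is labeled a \emph{conjecture} in the paper, and the paper does not supply a proof; it is posed as an open problem motivated by the similarities among $D$, $C6$, $C10$, and $C15$. So there is no proof to compare your proposal against.

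Your outline mimics the argument for Theorem~\ref{thm:10}, and you yourself correctly locate the gap: the hypotheses as stated do not furnish the symmetric containment needed on the $M_3$ side. In the $C6$ argument, the two containments $U_1 \subset U_2 \cup U_5$ and $U_2 \subset U_1 \cup U_3$ are what allow the loop to close, and each arises from a specific missing codeword ($1$ and $2$, respectively). The conjecture's hypotheses guarantee one such containment, for $\sigma \cap M_2$, but ``at least two non-mandatory intersections of facets not contained in $C$'' says nothing about \emph{which} intersections are missing or how they relate to $\sigma$, $M_1$, $M_2$, $M_3$. There is no mechanism forcing the second missing intersection to be usable on $L_3$.

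There is a further gap even before that point. Your points $z_2 \in L_2$ and $z_3 \in L_3$ lie in some enlarged intersections $U_{\tau_2}$ and $U_{\tau_3}$, but nothing guarantees $\tau_2 \cap \tau_3 \neq \emptyset$, so the segment $\overline{z_2 z_3}$ need not lie in any single $U_j$, and convexity gives you nothing. In the $C6$ proof the analogous step works because $x_{125}$ and $x_{23}$ share the index $2$, so $\overline{x_{125} x_{23}} \subset U_2$; this is a delicate combinatorial coincidence, not a consequence of the general hypotheses. Your closing suggestion---to strengthen the hypotheses to demand a mirrored condition $\sigma' \subset M_1$, $\sigma' \in C$, $\sigma' \cap M_3 \notin C$, together with some compatibility between $\sigma$ and $\sigma'$---is the right instinct, but even that would need to be formulated carefully to force the shared index on the final segment. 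As stated, the conjecture remains open and your sketch does not close it.
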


\section{Conclusion and future research}\label{sec:conclusion}
In this paper, we showed that all 3-sparse neural codes with no local obstructions on five neurons are open convex. This result also shows that every simplicial complex on five neurons is convex minimal except $\Delta(C4)$.  Furthermore, we showed that unstable neural codes can only occur for binary codes with no fewer than 5 neurons, and thus showed which of the minimal codes are both open and closed convex code and which are unstable codes. Besides Conjecture \ref{conj}, some future problems that can be investigated are:  Does the monotonicity condition hold for closed convex codes (in particular, for the closed convex codes from Table \ref{table:class})? Which codewords need to be added to $C6, C10,$ or $C15$ to make each code closed convex? Are there examples of unstable codes which are not 3-sparse?

\section{Acknowledgments}
Over summer 2017, SAG was supported by the Dr.\ M.\ Jean McKemie Endowed Student/Faculty Fund for Innovative Mathematics Summer Scholarship, and KP was supported by the Presidential Excellence Grant of St.\ Edward's University. We thank Zvi Rosen for his contribution to the realization of $C10$, Luis Garc\'ia Puente for discussion on this topic, and Anne Shiu for helpful comments on previous drafts.

\bibliographystyle{plain}
\bibliography{Bibil}

\begin{thebibliography}{1}

\bibitem{newlocal}
Aaron Chen, Florian Frick, and Anne Shiu.
\newblock Neural codes, decidability, and a new local obstruction to convexity.
\newblock {\em SIAM Journal on Applied Algebra and Geometry}, 3:44--66, 2019.

\bibitem{openclosed}
Joshua Cruz, Chad Giusti, Vladimir Itskov, and Bill Kronholm.
\newblock On open and closed convex codes.
\newblock {\em Discrete and Combinatorial Geometry}, 61:247--270, 2019.

\bibitem{curtoetal}
Carina Curto, Elizabeth Gross, Jack Jeffries, Katie Morrison, Mohamed Omar, Zvi
  Rosen, Anne Shiu, and Nora Youngs.
\newblock What makes a neural code convex?
\newblock {\em SIAM Journal on Applied Algebra and Geometry}, 1:222--238, 2017.

\bibitem{neuralring}
Carina Curto, Vladimir Itskov, Alan Veliz-Cuba, and Nora Youngs.
\newblock The neural ring: an algebraic tool for analyzing the intrinic
  structure of neural codes.
\newblock {\em Bulletin of Mathematical Biology}, 75:1571--1611, 2013.

\bibitem{lien}
Caitlin Lienkaemper, Anne Shiu, and Zev Woodstock.
\newblock Obstructions to convexity in neural codes.
\newblock {\em Advances in Applied Mathematics}, 85:31--59, 2017.

\bibitem{placecells}
John O'Keefe and Jonathan Dostrovsky.
\newblock The hippocampus as a spatial map. preliminary evidence from unit
  activity in the freely-moving rat.
\newblock {\em Brain Research}, 34:171--175, 1971.

\bibitem{yagerthesis}
Rutger Yager.
\newblock On the open convexity of neural codes with five neurons.
\newblock Master's thesis, Sam Houston State University, 2018.

\end{thebibliography}

\appendix

\section{Construction of $C22$}\label{app:cleanup}

\begin{prop}\label{prop:c22}
	$C22=\{145, 124, 135, 235, 125, 123, 234, 12, 13, 14, 15,$ $23, 24, 25, 35, 1, 2, 3, 5, \emptyset \}$ is convex with minimal embedding dimension 3.
\end{prop}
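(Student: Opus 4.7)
The plan is to prove $d(C22) = 3$ by establishing the upper and lower bounds separately. For $d(C22) \le 3$, I would exhibit an explicit realization of $C22$ by open convex sets in $\mathbb{R}^3$. For $d(C22) \ge 3$, I would rule out any planar realization by a clean application of Helly's theorem.

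The lower bound is the conceptual heart of the argument. The key observation is that $C22$ contains all four triple codewords $123$, $125$, $135$, $235$, yet no codeword of $C22$ contains the set $\{1,2,3,5\}$ (checking directly: every facet $145, 124, 135, 235, 125, 123, 234$ omits at least one of $1,2,3,5$, and every non-maximal codeword is too small). Suppose for contradiction that $\mathcal{U}=\{U_1,\ldots,U_5\}$ is an open convex realization of $C22$ in $\mathbb{R}^2$. Then each of the intersections
\[
U_1\cap U_2\cap U_3,\quad U_1\cap U_2\cap U_5,\quad U_1\cap U_3\cap U_5,\quad U_2\cap U_3\cap U_5
\]
is nonempty, since each contains the realization region of the corresponding codeword. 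Helly's theorem in $\mathbb{R}^2$ (applied to the four convex sets $U_1,U_2,U_3,U_5$) then forces $U_1\cap U_2\cap U_3\cap U_5\neq\emptyset$. Any point $p$ in this quadruple intersection has a codeword $\sigma\supseteq\{1,2,3,5\}$ in $C(\mathcal{U})$, contradicting the fact that $C22$ contains no such codeword. Hence $d(C22)\geq 3$.

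For the upper bound I would give an explicit construction. The reason the obstruction above disappears in $\mathbb{R}^3$ is that Helly in dimension $3$ requires all \emph{four-wise} intersections, not all triples, so one can arrange four convex bodies $U_1, U_2, U_3, U_5$ with every triple intersection nonempty while their common intersection is empty. A convenient template is to take four half-space-bounded convex regions (for example, four tetrahedra or four translated convex bodies) positioned around a central ``gap'' so that the four codewords $123, 125, 135, 235$ each correspond to a distinct nonempty triple-overlap region, but no point lies in all four. Once $U_1, U_2, U_3, U_5$ are fixed, I would construct $U_4$ as a thin convex region (e.g., a slab or elongated tetrahedron) arranged to meet $U_1\cap U_5$, $U_1\cap U_2$, and $U_2\cap U_3$ (producing the facets $145, 124, 234$ and the proper codewords $14, 24$) while avoiding $U_3\cap U_5$ (so that $345$ is excluded) and avoiding every intersection that would create a forbidden codeword. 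Verification then reduces to checking that each listed codeword region is nonempty and that no extra codeword appears.

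The main obstacle is the construction rather than the lower bound. The Helly argument is one line once the right four sets are identified. The construction, by contrast, must simultaneously realize seven distinct triple-intersection regions, add $U_4$ without creating any four-wise intersection or introducing $34$, $45$, $345$, $1235$, $1234$, $1245$, $1345$, $2345$, or $12345$, and still produce all required $2$-codewords. Pinning down explicit coordinates and verifying by direct computation of $C(\mathcal{U})$ is the bulk of the technical work, which is why a dedicated appendix is used.
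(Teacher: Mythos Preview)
Your approach matches the paper's almost exactly: the lower bound via Helly on $U_1,U_2,U_3,U_5$ is precisely what the paper invokes (phrased there as ``Helly dimension $3$ from the hollow simplex $123,125,135,235$''), and your upper-bound strategy---arrange $U_1,U_2,U_3,U_5$ as four convex bodies with all triples meeting but empty common intersection, then thread a thin convex $U_4$ through---is the same plan the paper executes. The paper's concrete realization takes the four faces of a tetrahedron, thickens each inward by a different amount to get $U_1,U_2,U_3,U_5$, and then lets $U_4$ be a thin open cylinder about a segment chosen to lie entirely in $U_1\cup U_2$ (which is the containment forcing $4\notin C22$ and ruling out $34,45$ simultaneously); this last containment $U_4\subseteq U_1\cup U_2$ is the one structural constraint worth making explicit in your write-up, since it is what drives the choice of $U_4$ rather than merely ``avoiding $U_3\cap U_5$.''
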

\begin{proof} $C22$ has Helly Dimension 3 (see \cite{curtoetal}) by the hollow simplex formed by 135, 235, 125, 123, which implies that the minimal embedding dimension is at least 3.  This code is indeed realizable as a convex code in three dimensions via the following construction:
	
	Construct a tetrahedron by taking the convex hull of the points $\{(0,0,0), (4,0,0), (0,4,0)$, $(0,0,4)\}$.  Create open sets in $\mathbb{R}^3$ by extending each face into the interior of the tetrahedron with the following thicknesses:
	\begin{itemize}
		\item[] $U_1$ is the face on the $xz$-plane with thickness 1.
		\item[] $U_2$ is the face on the $yz$-plane of thickness 3/4.
		\item[] $U_3$ is the face on the $xy$-plane of thickness 1/2.
		\item[] $U_5$ is the face on the diagonal of thickness 1/4.
	\end{itemize}
	
	This construction is shown in Figure \ref{fig:convexC22}. These sets give all codewords involving $1,2,3,5$ as desired.  Consider the points $p_3=(1,0,3)$ and $p_1=(0,2,0)$.  Under a slight perturbation, $p_3\in U_1\cup U_5\setminus (U_2\cup U_3)$ and $p_1\in U_2\cup U3\setminus (U_1\cup U_5)$.  The goal is to construct the set $U_4$ using the line segment $\overline{p_1p_3}$.
	
	We claim the line segment $L=\overline{p_1p_3}$ is completely contained in $U_1\cup U_2$.  Parametrize the line segment $L=L(t)$ as:
	\begin{align*}
	x&=t\\
	y&=2-2t\\
	z&=3t
	\end{align*}
	Note that $L(0)=p_1$ and $L(1)=p_3$.  It is a straightforward exercise to check that the rest of the line segment is covered as shown in Figure \ref{fig:covering}.

	\begin{figure}[ht]
		\centering
		\includegraphics[width=.99\textwidth]{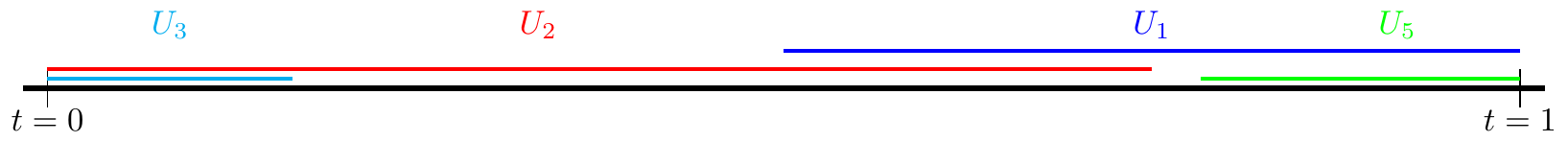}
		\caption{Covering of $L(t)$}\label{fig:covering}
	\end{figure}
	
	For $\epsilon$ sufficiently small, the open cylinder $C(\epsilon, L(t))$ with radius $\epsilon$ and center $L(t)$ will be completely contained in $U_1\cup U_2$.  Thus, if we let $U_4=C(\epsilon, L(t))$, we have our desired realization of the neural code $C22$.
	
	\begin{figure}[ht]
		\centering\includegraphics{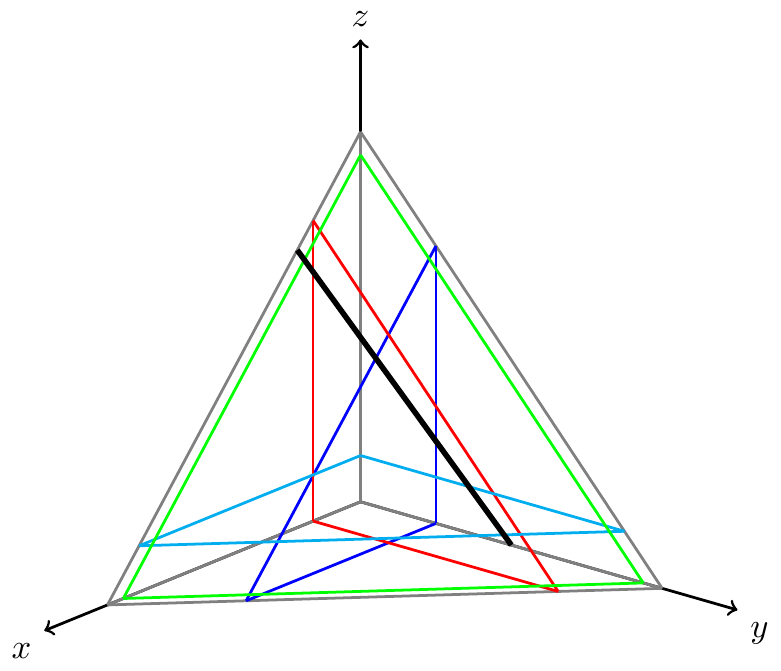}
		\caption{Convex Realization of code C22}\label{fig:convexC22}
	\end{figure}

\end{proof}

\section{Geometric realizations of $n=5$ open convex codes}\label{app:geo}

\begin{figure}[ht]
\centering
\begin{subfigure}{0.45\textwidth}
	\centering\includegraphics[width=.75\textwidth]{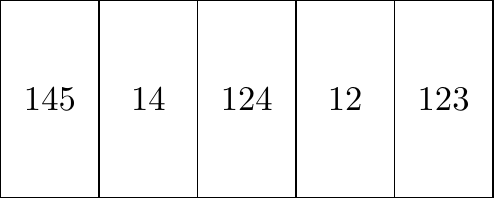}
	\caption*{$C1$}
\end{subfigure}
\begin{subfigure}{0.45\textwidth}
	\centering\includegraphics[width=.75\textwidth]{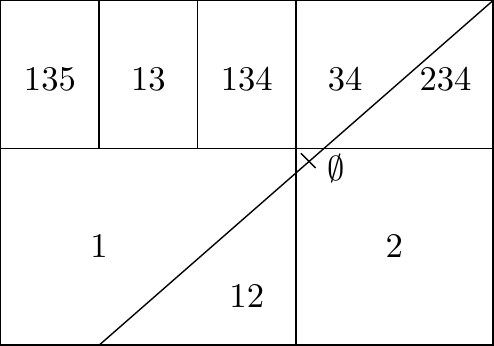}
	\caption*{$C2$}
\end{subfigure}

\vspace{1cm}

\begin{subfigure}{0.45\textwidth}
	\centering\includegraphics[width=.75\textwidth]{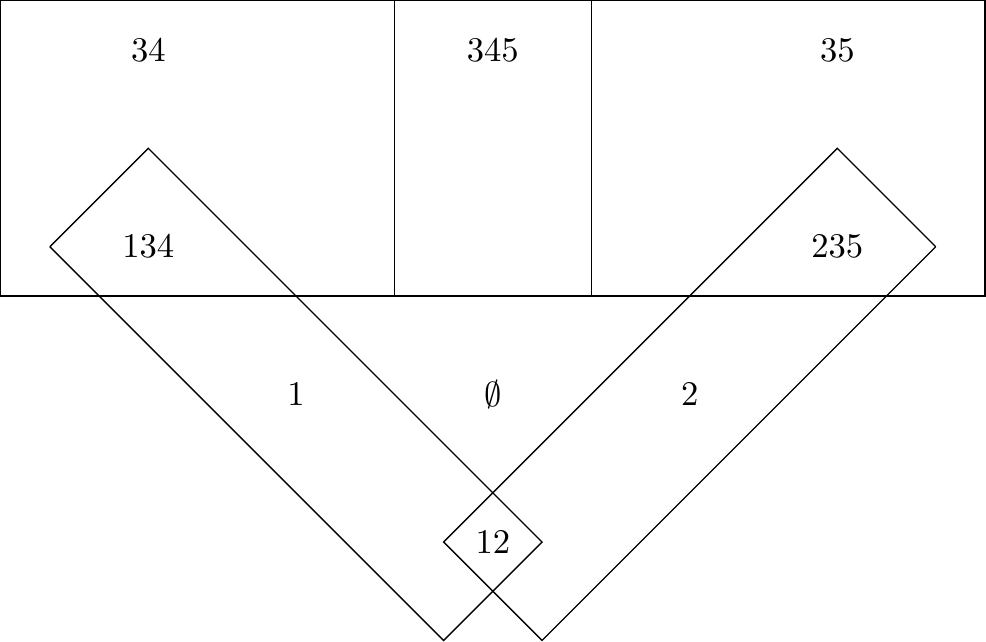}
	\caption*{$C3$}
\end{subfigure}
\begin{subfigure}{0.45\textwidth}
	\centering\includegraphics[width=.75\textwidth]{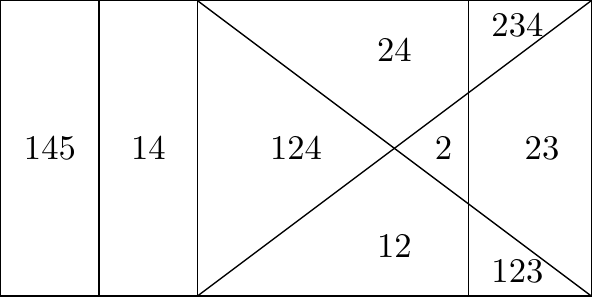}
	\caption*{$C5$}
\end{subfigure}

\vspace{1cm}

\begin{subfigure}{0.45\textwidth}
	\centering\includegraphics[width=.75\textwidth]{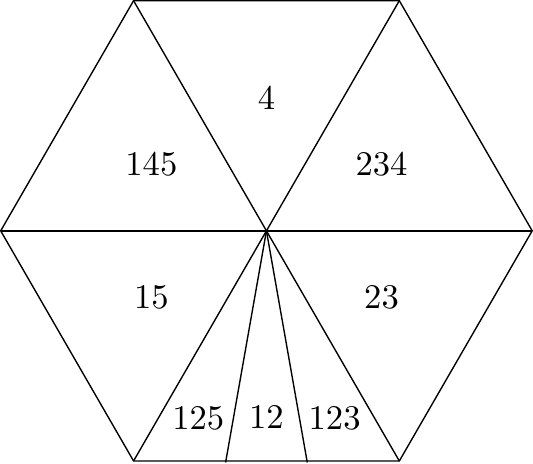}
	\caption*{$C6$}
\end{subfigure}
\begin{subfigure}{0.45\textwidth}
	\centering\includegraphics[width=.75\textwidth]{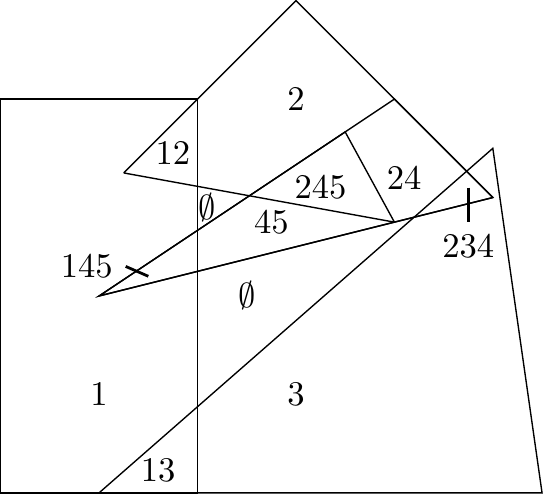}
	\caption*{$C7$}
\end{subfigure}

\end{figure}

\begin{figure}[ht]
\centering

\begin{subfigure}{0.45\textwidth}
	\centering\includegraphics[width=.75\textwidth]{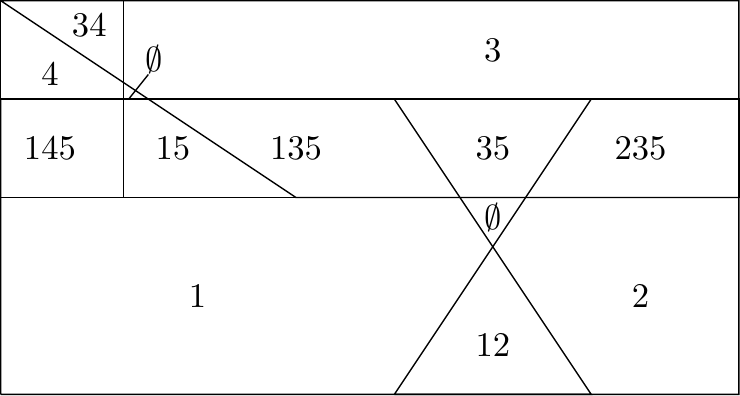}
	\caption*{$C8$}
\end{subfigure}
\begin{subfigure}{0.45\textwidth}
	\centering\includegraphics[width=.75\textwidth]{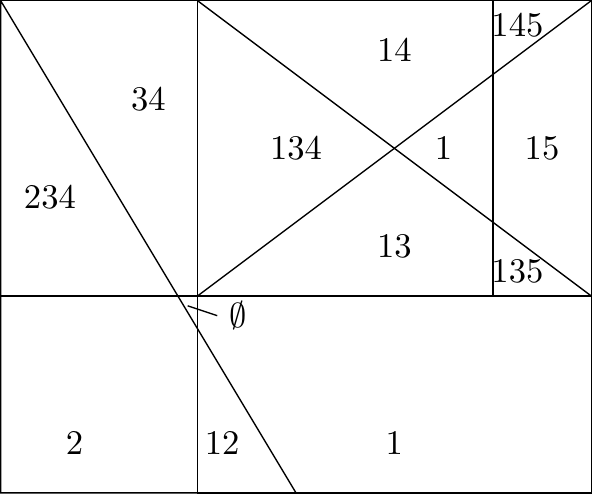}
	\caption*{$C9$}
\end{subfigure}

\vspace{1cm}

\begin{subfigure}{0.45\textwidth}
	\centering\includegraphics[width=.75\textwidth]{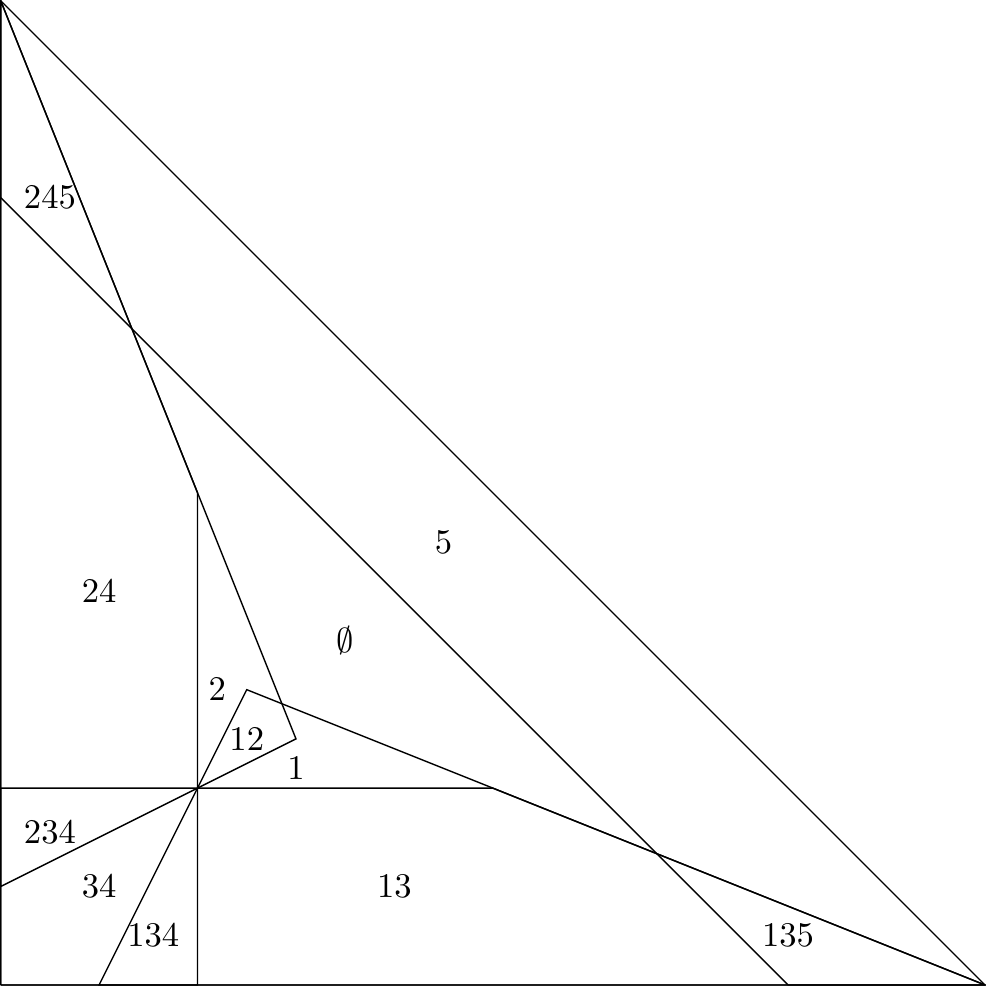}
	\caption*{$C10$}
\end{subfigure}
\begin{subfigure}{0.45\textwidth}
	\centering\includegraphics[width=.75\textwidth]{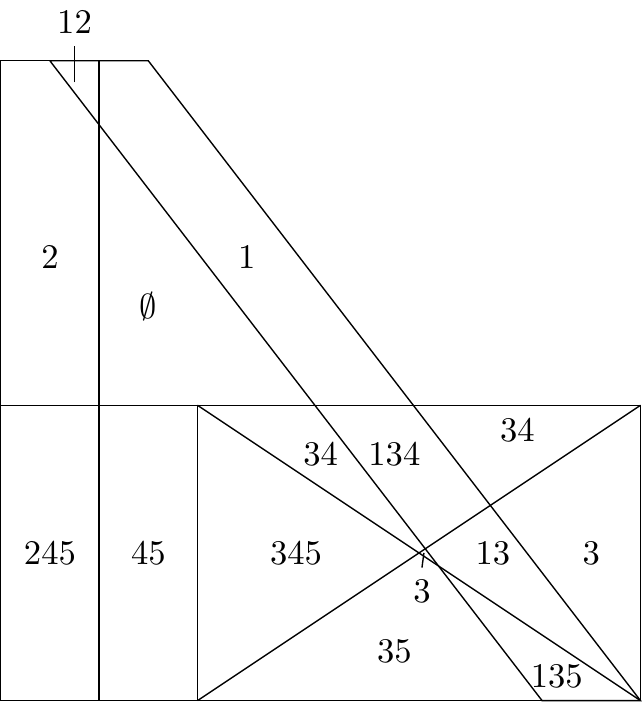}
	\caption*{$C11$}
\end{subfigure}

\vspace{1cm}

\begin{subfigure}{0.45\textwidth}
	\centering\includegraphics[width=.75\textwidth]{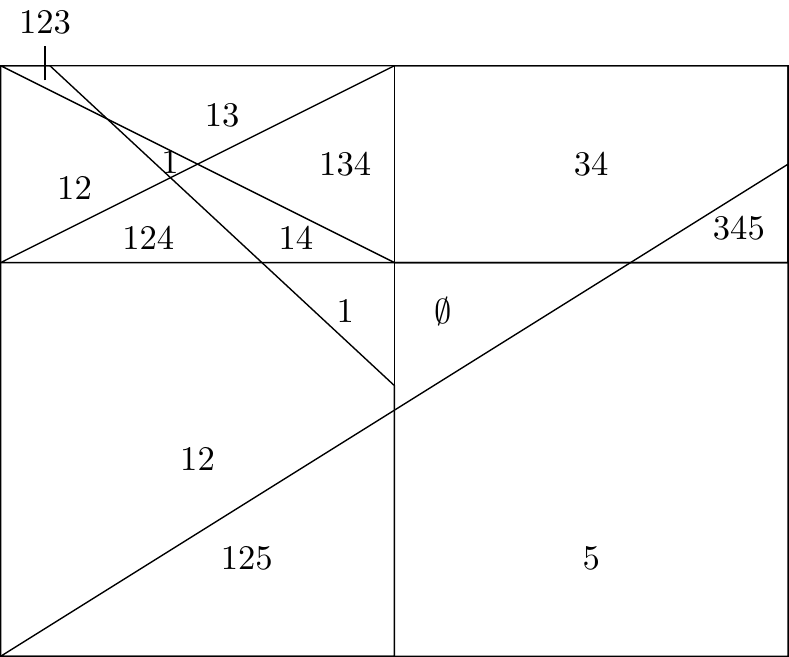}
	\caption*{$C12$}
\end{subfigure}
\begin{subfigure}{0.45\textwidth}
	\centering\includegraphics[width=.75\textwidth]{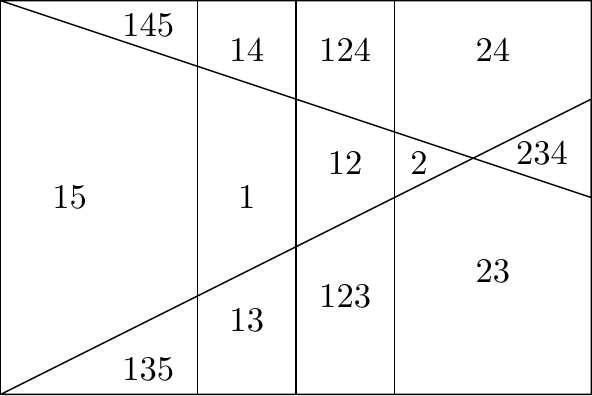}
	\caption*{$C13$}
\end{subfigure}

\end{figure}

\begin{figure}[ht]
\centering

\begin{subfigure}{0.45\textwidth}
	\centering\includegraphics[width=.75\textwidth]{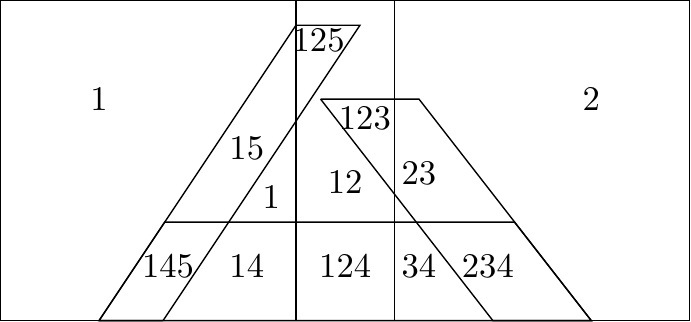}
	\caption*{$C14$}
\end{subfigure}
\begin{subfigure}{0.45\textwidth}
	\centering\includegraphics[width=.75\textwidth]{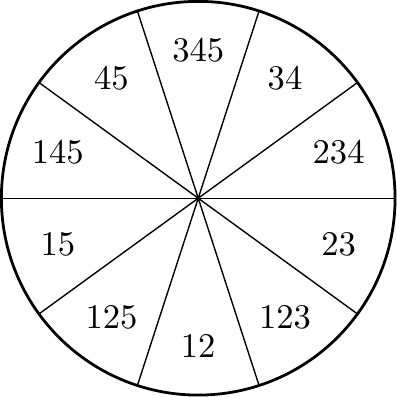}
	\caption*{$C15$}
\end{subfigure}

\begin{subfigure}{0.45\textwidth}
	\centering\includegraphics[width=.75\textwidth]{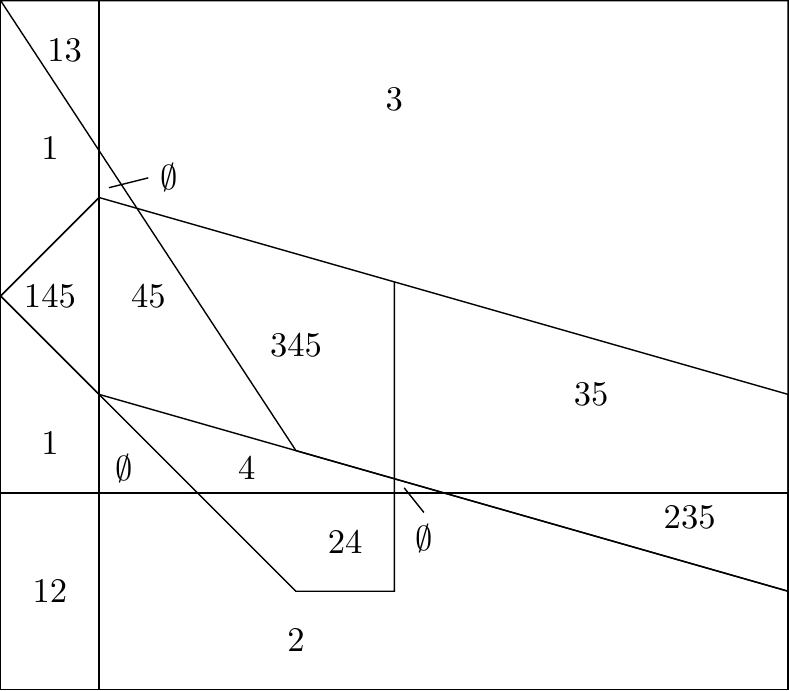}
	\caption*{$C16$}
\end{subfigure}
\begin{subfigure}{0.45\textwidth}
	\centering\includegraphics[width=.75\textwidth]{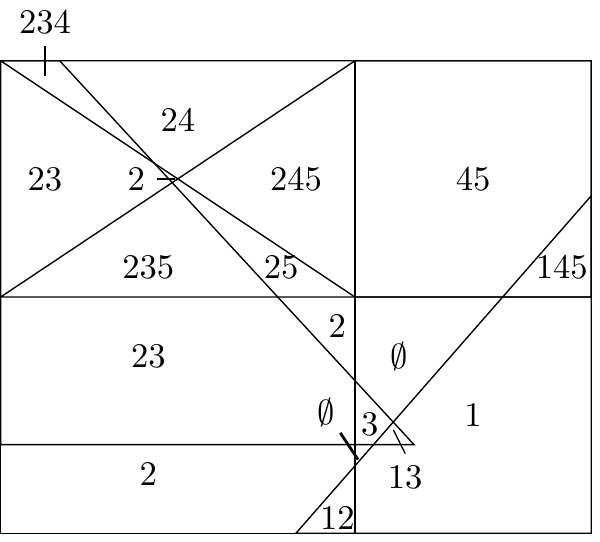}
	\caption*{$C17$}
\end{subfigure}

\vspace{1cm}

	\begin{subfigure}{0.45\textwidth}
		\centering\includegraphics[width=.75\textwidth]{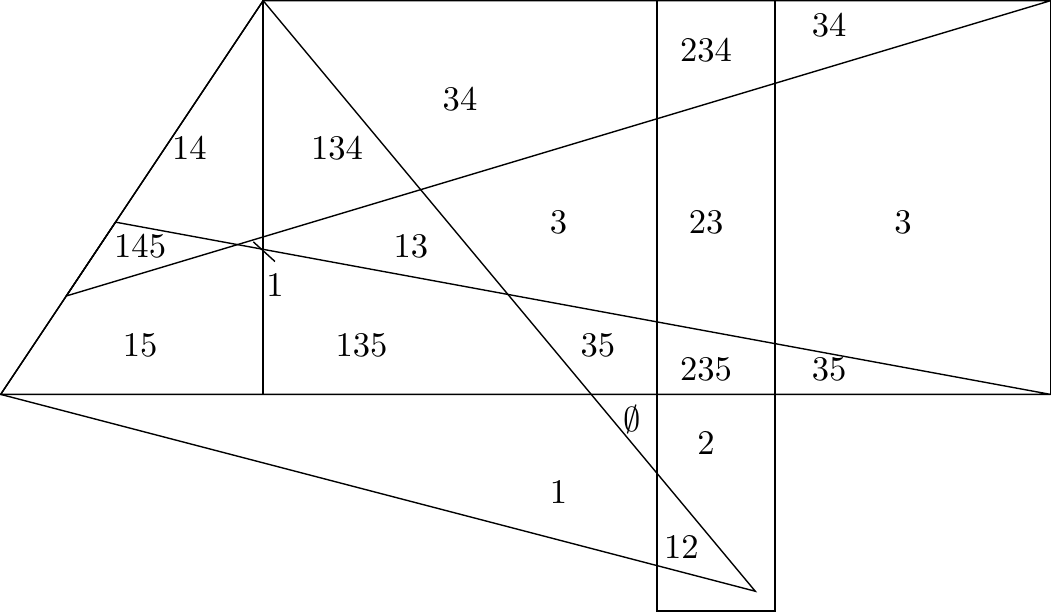}
		\caption*{$C18$}
	\end{subfigure}
	\begin{subfigure}{0.45\textwidth}
		\centering\includegraphics[width=.75\textwidth]{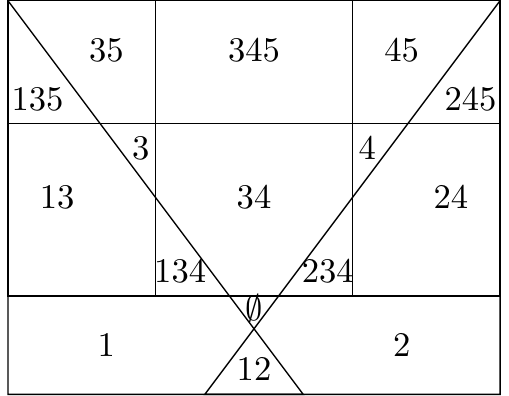}
		\caption*{$C19$}
	\end{subfigure}
\end{figure}

\begin{figure}[ht]
	\centering
	
	\begin{subfigure}{0.45\textwidth}
		\centering\includegraphics[width=.75\textwidth]{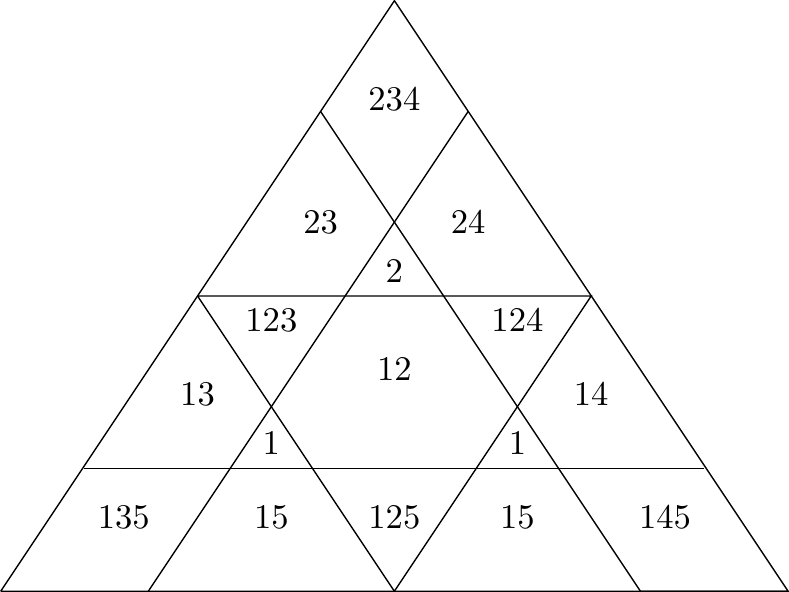}
		\caption*{$C20$}
	\end{subfigure}
	\begin{subfigure}{0.45\textwidth}
		\centering\includegraphics[width=.9\textwidth]{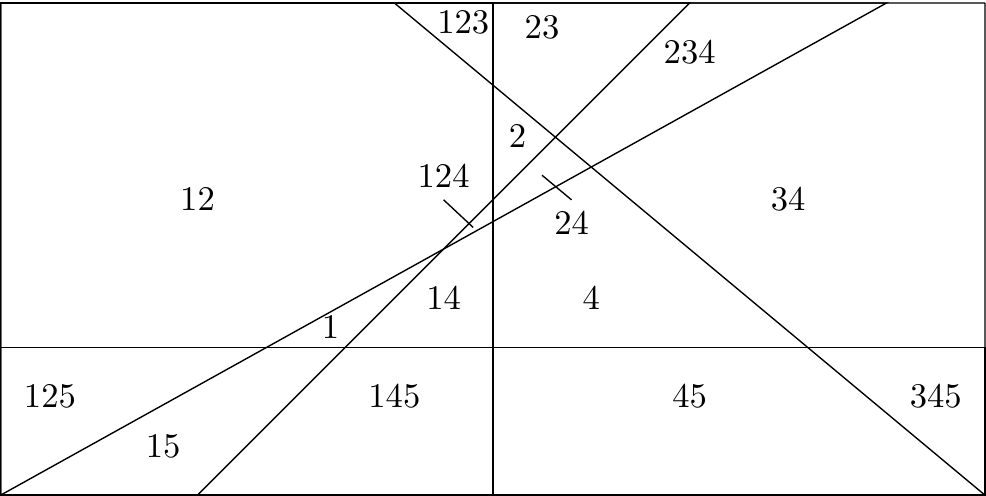}
		\caption*{$C21$}
	\end{subfigure}

\vspace{1cm}
	
	\begin{subfigure}{0.9\textwidth}
		\centering\includegraphics[width=.75\textwidth]{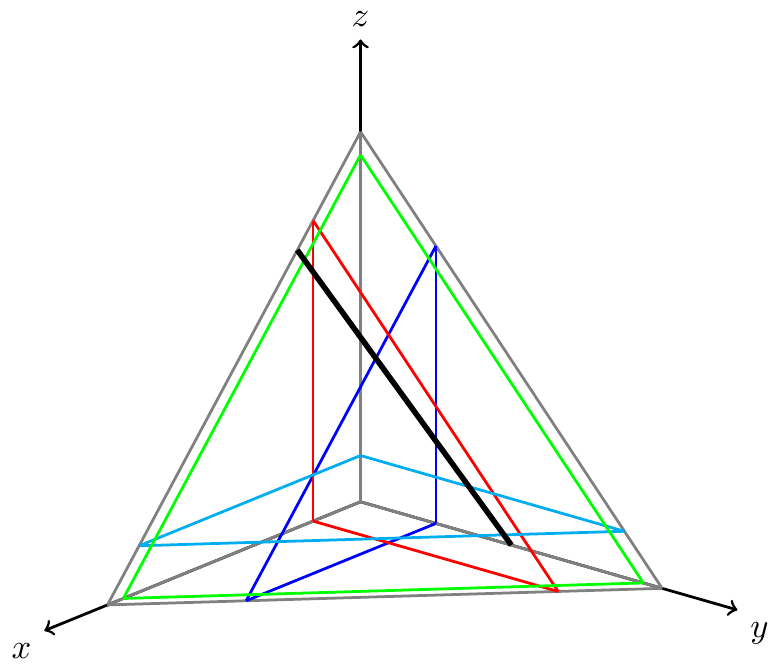}
		\caption*{$C22$}
	\end{subfigure}

\end{figure}

\end{document}